\newtheorem{corollary}{Corollary}
\newtheorem{definition}{Definition}
\setlist[itemize]{leftmargin=*}
\setlist[enumerate]{leftmargin=*}
\begin{document}
\title{K-Core Decomposition on Super Large Graphs with \\ Limited Resources}









\author{Shicheng Gao$^{\S}$, Jie Xu${^\ddagger}$, Xiaosen Li${^\ddagger}$, Fangcheng Fu${^\dagger}$, Wentao Zhang${^\dagger}$, Wen Ouyang${^\ddagger}$, \\Yangyu Tao${^{\ddagger}}$, Bin Cui${^{\dagger\triangle\star}}$}
\affiliation{$^{\S}$ School of Electronic and Computer Engineering of Shenzhen Graduate School, Peking University~~~~~$^\ddagger$Tencent Inc.\\ $^\triangle$Center for Data Science, Peking University \& National Engineering Laboratory for Big Data Analysis and Applications\\ $^\dagger$School of CS \& Key Laboratory of High Confidence Software Technologies, Peking University\\ $^{\star}$Institute of Computational Social Science, Peking University (Qingdao), China}
\affiliation{$^{{\S}\dagger}$\{gscim.d,ccchengff,wentao.zhang,bin.cui\}@pku.edu.cn\\$^\ddagger$\{joyjxu,hansenli,gdpouyang,brucetao\}@tencent.com}

\renewcommand{\shortauthors}{Shicheng Gao, et al.}

\begin{abstract}
K-core decomposition is a commonly used metric to analyze graph structure or study the relative importance of nodes in complex graphs.
Recent years have seen rapid growth in the scale of the graph, especially in industrial settings.
For example, our industrial partner runs popular social applications with billions of users and is able to gather a rich set of user data.
As a result, applying K-core decomposition on large graphs has attracted more and more attention from academics and the industry.
A simple but effective method to deal with large graphs is to train them in the distributed settings, and some distributed K-core decomposition algorithms are also proposed.
Despite their effectiveness, we experimentally and theoretically observe that these algorithms consume too many resources and become unstable on super-large-scale graphs, especially when the given resources are limited. 
In this paper, we deal with those super-large-scale graphs and propose a divide-and-conquer strategy on top of the distributed K-core decomposition algorithm. 
We evaluate our approach on three large graphs. The experimental results show that the consumption of resources can be significantly reduced, and the calculation on large-scale graphs becomes more stable than the existing methods.
For example, the distributed K-core decomposition algorithm can scale to a large graph with 136 billion edges without losing correctness with our divide-and-conquer technique.
\end{abstract}

%
%
\begin{CCSXML}
<ccs2012>
   <concept>
       <concept_id>10002950.10003624.10003633.10010917</concept_id>
       <concept_desc>Mathematics of computing~Graph algorithms</concept_desc>
       <concept_significance>500</concept_significance>
       </concept>
   <concept>
       <concept_id>10003752.10003809.10011254.10011257</concept_id>
       <concept_desc>Theory of computation~Divide and conquer</concept_desc>
       <concept_significance>500</concept_significance>
       </concept>
   <concept>
       <concept_id>10010520.10010521.10010537</concept_id>
       <concept_desc>Computer systems organization~Distributed architectures</concept_desc>
       <concept_significance>100</concept_significance>
       </concept>
 </ccs2012>
\end{CCSXML}

\ccsdesc[500]{Mathematics of computing~Graph algorithms}
\ccsdesc[500]{Theory of computation~Divide and conquer}
\ccsdesc[100]{Computer systems organization~Distributed architectures}

\keywords{K-core Decomposition, Divide-and-Conquer, Large-scale Graphs}

\maketitle

{\fontsize{8pt}{8pt} \selectfont
\textbf{ACM Reference Format:}\\
Shicheng Gao, Jie Xu, Xiaosen Li, Fangcheng Fu, Wentao Zhang, Wen Ouyang, Yangyu Tao, Bin Cui. 2022. K-Core Decomposition on Super Large Graphs with Limited Resources. In \textit{The 37th ACM/SIGAPP Symposium on Applied Computing (SAC '22), April 25–29, 2022, Virtual Event, . } ACM, New York, NY, USA, 10 pages. https://doi.org/10.1145/3477314.3507064 }

\section{Introduction}
\label{sec:intro}

\subsubsection*{\underline{Background and Motivation}}
Relationships between individuals or entities can be captured as graphs where each node (a.k.a. vertex) represents an individual or entity, and each edge represents the corresponding connection or relation. Capturing graph structure of data is useful in many applications, 
such as targeted advertising~\cite{yang2008discovering,wu2020graph}, knowledge distillation~\cite{zhang2020reliable,zhang2021rod}, data annotation~\cite{DBLP:journals/pvldb/ZhangYWSLW021,zhang2021rim,zhang2021alg}, and protein analysis~\cite{,wang2010fast,gutierrez2014biological}.
Each node in a graph often exhibits a crucial property --- the importance or utility of a node depends on the number of connections between it and other nodes. Especially in social networks, the engagement of a node is more likely to happen if its most neighbors are engaged. To this end, the concept of K-core is introduced to estimate the coreness of a node. A K-core subgraph of graph $G$ is the largest induced subgraph where every vertex has a degree of at least $k$.

The most prevailing way to estimate the coreness of a node is by the K-core decomposition, which has widespread adoption in many areas, such as network structure visualization, Internet topology exploration, and community-related problems~\cite{carmi2007model,alvarez2006large, serrano2009extracting}. Furthermore, it works as a fundamental building block for a variety of problems, such as finding the approximation for the densest subgraph problem or the densest at-least-$k$-subgraph problem~\cite{lee2010survey}. Consequently, K-core decomposition has become a popular topic in both academic and industrial communities.
Notably, driven by the explosive surge of data volume and increasing complexity of social networks, 
many efforts have been devoted to accomplishing the K-core decomposition in a distributed manner~\cite{montresor2012distributed, mandal2017distributed, pechlivanidou2014mapreduce}. 

One of our industrial partners runs popular social applications with billions of users and can extract various graphs, which are then utilized to mine the user relationships for other applications, such as user profiling and machine learning tasks. 
To enhance the model performance, our industrial partner attempts to process larger graphs, i.e., more nodes and more edges. Therefore, distributed $k$-core decomposition has become a widely-used application in its production pipeline. 

\subsubsection*{\underline{Challenges}}
Although the increment in data volume is beneficial for mining more precise user information, significant performance degradation is encountered. To summarize, we observe two severe issues when processing large graphs.

First, since the production cluster is shared by many tasks, the maximal resources allowed for each task are limited, which becomes insufficient when the scale of graphs increases. Take a real case in our industrial partner as an example, where the graph is composed of 2.2 billion nodes and 136 billion edges. It takes several terabytes of memory to store such a huge graph. 
To the best of our knowledge, the existing distributed graph processing engines keep the entire graph in memory throughout the execution. Therefore, none of these existing works can support $k$-core decomposition for such a large graph within the resource limitation. 

Second, even for graphs that can be supported with the resource constraints, the performance is undesirable when the graphs are enlarged due to the increase in ``communication amount''. During the execution, the $k$-core decomposition algorithm iteratively updates the coreness value of each node according to its neighbors. In the distributed setting, every update triggers the communication between different workers to ensure that all neighbors can retrieve up-to-date information. We call the total number of communicated updates as the communication amount. Undoubtedly, when there are tremendous edges, the number of updated nodes increases correspondingly, leading to non-trivial communication overhead and thus dampening the efficiency.

\subsubsection*{\underline{Summary of Contributions}}
As discussed above, the undesirable performance mainly comes from the large scale of graphs. To tackle these challenges, we develop a divide-and-conquer strategy, namely \emph{DC-$k$Core}, to support extremely large graphs under certain resource constraints and accelerate the total running time cost. To the best of our knowledge, this is the first work that applies the divide-and-conquer strategy for the $k$-core decomposition problem. The major contributions of this work are summarized below. 

\paragraph{The Divide Step: Graph Division}
To support extremely large graphs under limited resources, we propose to divide the graphs into different parts, where each part is a subgraph of the original graph. We devise a brand new graph division strategy that matches the coreness computation characteristics, rather than randomly distributing the original graphs in the vertex- or edge-centric manner. Specifically, since the nodes with small coreness values have zero impact on the computation for nodes with large coreness values, we propose dividing the graphs by a given degree threshold. By doing so, different subgraphs can be processed individually, reducing the peak resource requirement.  

\paragraph{The Conquer Step: Subgraph Decomposition}
After an original graph is divided into multiple subgraphs, we introduce to apply $k$-core decomposition on each subgraph individually. However, since the edges connected between different subgraphs are cut off, part of the neighbor information has been lost. To ensure the correctness of subgraph decomposition, we generate the external information, which can be reckoned as the summary of neighbor information between different subgraphs. A novel algorithm that decomposes the subgraph with the help of external information is developed. Compared with decomposing the original graphs directly, the utilization of external information can reduce the communication amount to a large extent, boosting the overall performance. 

\paragraph{Deployment and Evaluation on Real-World Applications}
We implement the proposed divide-and-conquer strategy, namely \emph{DC-$k$Core}, on top of Apache Spark and the Angel parameter server platform~\cite{zaharia2010spark,jiang2018angel}. \emph{DC-$k$Core} has been widely deployed in the production pipeline of our industrial partner. There are apps adopting \emph{DC-$k$Core} to execute \emph{KOL(key opinion leader)} analysis, or using it to detect the merchant from massive Payment networks. Besides, it is also used in fraud-detection for risk control. Comprehensive experiments are conducted to evaluate the effectiveness of \emph{DC-$k$Core}. First, empirical results show that \emph{DC-$k$Core} is more efficient than the existing distributed $k$-core decomposition implementations and supports a much larger scale of graphs with up to 136 billion edges. To the best of our knowledge, this is the first work supporting such a large graph scale. Second, we empirically verify that \emph{DC-$k$Core} is also efficient for small- or medium-scale of graphs since it can reduce the communication amount significantly by summarizing the neighbor information. Moreover, we evaluate the sensitivity of \emph{DC-$k$Core} in terms of the number of subgraphs or choices of division strategies, shedding light on the potential improvement on a wider range of applications.

\section{related work}
\label{sec:related_work}
The definition of \emph{K-core} is first brought out by Serdman\cite{seidman1983network} to characterize the cohesive regions of graphs. There is a considerable amount of research\cite{luczak1991size, janson2007simple, molloy2005cores, cooper2004cores, pittel1996sudden} done to study the existence of non-empty k-core in a random graph. 

Thanks to the well-defined structure of k-core, it plays an important role in the analysis of the structure of certain types of networks\cite{dorogovtsev2006k} and generating graphs with specific properties\cite{baur2008augmenting}. There exist many graph problems such as maximal clique finding\cite{balasundaram2011clique}, dense subgraph discovery\cite{andersen2009finding}, and betweenness approximation\cite{healy2006characterization} using k-core decomposition as a subroutine.

The first K-core decomposition algorithm was proposed by Batagelj and Zavernik (BZ)\cite{batagelj2011fast}. In their work, the coreness of each node is generated by recursively removing nodes of degree less than $k$. It requires random access to the entire graph. Therefore, in order to ensure good efficiency, the entire graph should be stored in memory while processing. As the scale of the graph increases, memory cannot support the calculation of the whole graph. As a result, K-core decomposition algorithm using the combination of memory and external memory was developed. The first K-core decomposition algorithm (EMCore) using secondary storage was proposed by Cheng et al. in \cite{cheng2011efficient}. EMCore uses a strategy based on the BZ algorithm which requires $O(k_{max})$ scans of the graph, where $k_{max}$ is the largest coreness value of the graph. \cite{khaouid2015k} optimized the implementation of EMCore and can handle networks of billions of edges using a single consumer-level machine and can produce excellent approximations in only a fraction of the execution time.

\cite{esfandiari2018parallel} computes an approximation of nodes coreness by sampling edges into subgraphs. The computation of each subgraph follows a way of peeling nodes.

With the increasing graph scale, the idea of distributed K-core decomposition was first proposed in \cite{montresor2012distributed}, where \emph{node index} was introduced. 
This work has been implemented by Mandal et al.\cite{mandal2017distributed} on Spark. 
The realization of k-core decomposition on PSGraph is also based on \cite{montresor2012distributed}. 
This algorithm was further extended to dynamic graphs by Aridhi et al.\cite{aridhi2016distributed}. Moreover, there are I/O efficient algorithms\cite{cheng2011efficient, wen2016efficient} aiming at handling large-scale graphs which cannot fit into memory proposed, based on the distributed algorithm proposed by Montresor et al.\cite{montresor2012distributed}. 
There is also another distributed algorithm using MapReduce\cite{pechlivanidou2014mapreduce, dean2004mapreduce}.

K-core decomposition are also studied on various kinds of networks, such as on directed\cite{giatsidis2013d} and weighted\cite{giatsidis2011evaluating} network. Moreover, Li et al.\cite{li2013efficient} deal with the problem of k-
core maintenance in large dynamic graphs. There are many works that deal with k-core decomposition on dynamic graphs\cite{sariyuce2016incremental, miorandi2010k, jakma2012distributed}. And the k-core decomposition in the streaming scenario was first discussed in \cite{sariyuce2013streaming}.

\section{preliminaries}
\label{sec:preliminaries}
\begin{table}[!t]
  \caption{Notations}
  \label{tab:notations}
  \begin{tabular}{ccl}
    \toprule
    Symbols&Definitions\\
    \midrule
    $\mathcal{N}(v)$ & Neighbor set of node $v$\\
    $deg(v)$ & Number of neighbors of node $v$ (a.k.a., degree) \\
    $G_k(V_k, E_k)$ & The $k$-core subgraph of $G$ (abbr. $G_k$) \\
    $G_{!k}(V_{!k}, E_{!k})$ & The rest part of $G$ by removing $G_k$ (abbr. $G_{!k}$) \\
    $G_{\hat{k}}$ & The rough $k$-core subgraph of $G$ \\
    $\mathcal{N}_{G_k}(v)$ & Neighbor set in $G_k$ of node $v$ \\
    $deg_{G_k}(v)$ & Number of neighbors in $G_k$ of node $v$ \\
    $coreness(v)$ & the coreness value of node $v$\\
    $\mathcal{E}(v)$ & Number of neighbors in $G_k$ of node $v\in G_{!k}$ \\
  \bottomrule
\end{tabular}
\end{table}

\subsection{Problem Definition}
A \textit{network} or \textit{graph} is denoted by $G(V, E)$, where $V$ is the set of vertices (nodes) and $E$ is the set of edges (links). We use the symbol $n$ to represent the number of nodes ($n = |V|$) and the symbol $m$ to represent the number of edges ($m = |E|$). Given a node $v \in V$, the set of its neighbors are denoted by $\mathcal{N}_G(v)$, i.e., the set of all nodes adjacent to $v$. We use $deg_G(v)$ to denote the degree (number of neighbors) of node $v$ ($deg_G(v) = |\mathcal{N}_G(v)|$).

Given an undirected and unweighted graph $G(V, E)$, 
the concept of K-core decomposition is considered as the following definitions. 

\begin{definition}[K-core Subgraph]
\label{def:kcore}
Let $H(C, E|C)$  be a subgraph of $G(V, E)$, where $C \subseteq V, E|C = \{(u,v)\in E: u \in C \land v \in C\}$.
$H$ is defined to be the K-core subgraph of $G$, denoted by $G_k$, if it is the maximal subgraph of $G$ satisfying the following property: $\forall v \in C:deg_H(v)\geq k$.
\end{definition}


\begin{definition}[Coreness Value]
\label{def:coreness}
A node $v \in V$ has coreness value $coreness(v) = k$, if it belongs to the k-core subgraph but not to the (k+1)-core subgraph.
\end{definition}

\begin{figure}[!t]
  \centering
  \setlength{\abovecaptionskip}{-0.1cm}   
  \setlength{\belowcaptionskip}{-0.2cm}   
  \includegraphics[width=0.8\linewidth]{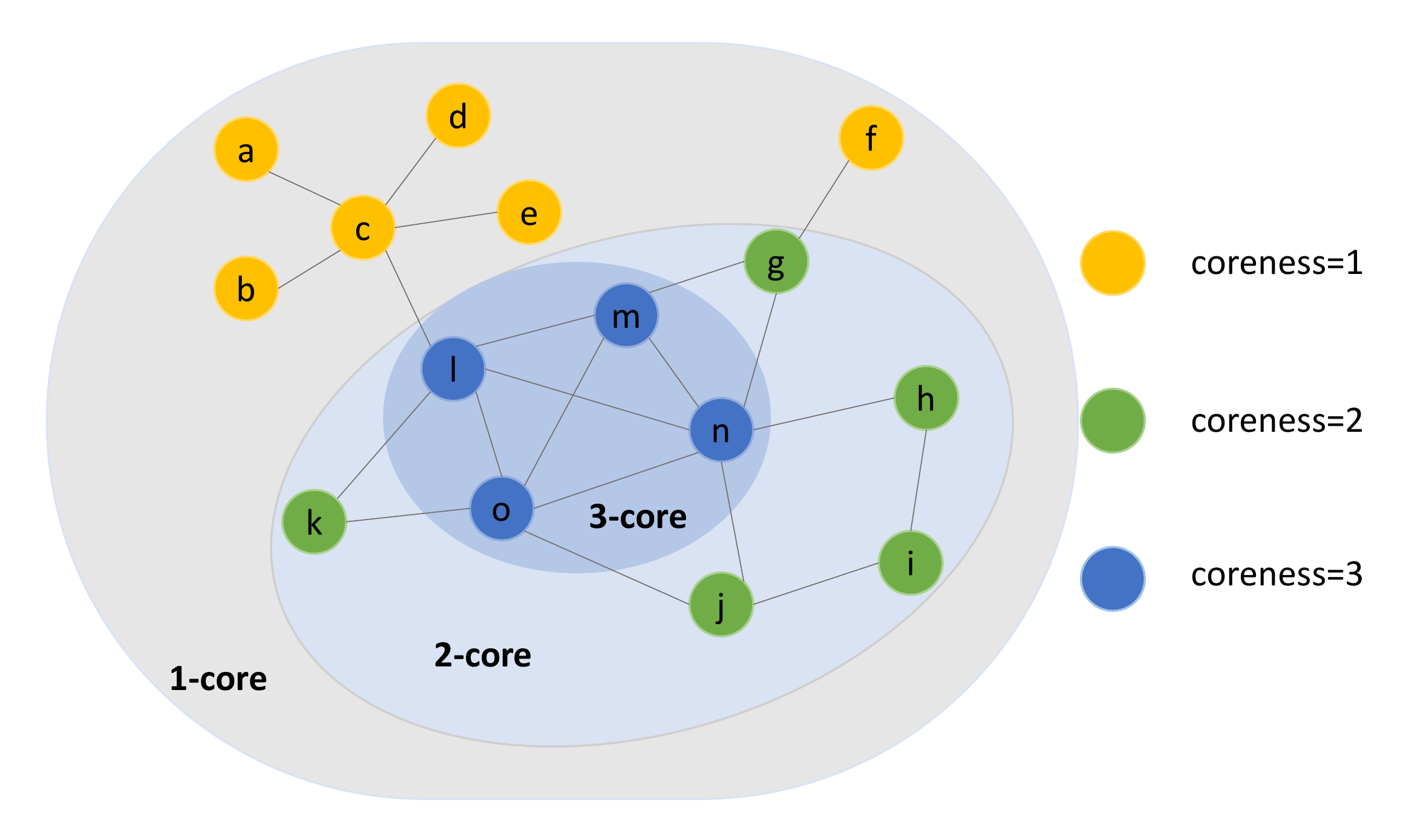}
  \caption{A instance graph with max coreness 3.}
  \label{fig:samplegraph}
\end{figure}

For the example shown in Figure~\ref{fig:samplegraph}, the graph $G$ has 15 nodes, and each k-core subgraph is surrounded by a circle.
Given an undirected and unweighted graph $G(V, E)$, the goal of \textbf{k-core decomposition} is to \emph{find all the k-core subgraphs $\{G_k\}_{k=0}^{k_{max}}$, where $k_{max} = max\{coreness(v) | \forall v \in V\}$}.
Obviously, this problem is equivalently to \emph{find the coreness values for all nodes in $G$.}

\begin{algorithm}[!t] 
\caption{Procedure to estimate the coreness value for a node $v$ in one iteration.} 
\label{alg:hindexestimate} 
\begin{algorithmic}[1]
\REQUIRE ~~\\
$\mathcal{N}(v)$: Neighbors of node $v$;
\ENSURE ~~\\
Estimation for the coreness value of node $v$;

\STATE $Cores \leftarrow$ Estimation of coreness values of all nodes in $\mathcal{N}(v)$ from the previous iteration;
\STATE $Cores \leftarrow Cores.sorted.reverse$
\STATE $i \leftarrow 0$
\STATE $C_v \leftarrow Cores.length$

\WHILE{$i < Cores.length$}
    \IF{$Cores(i) \geq i + 1$}
        \STATE $i \leftarrow i + 1$;
    \ELSE
        \STATE $C_v \leftarrow i$;
        \STATE break while;
    \ENDIF
\ENDWHILE
\RETURN $C_v$;
\end{algorithmic}
\end{algorithm}

\subsection{Distributed K-core Decomposition}
Owing to the incredible surge of data volume, it is impossible to process the entire graph on a single machine. Thus, several works were developed to accomplish the k-core decomposition in a distributed manner. 
The first distributed k-core decomposition algorithm was reported by~\citet{montresor2012distributed}.
The essential idea is to update the coreness values by calculating the \textbf{node index} iteratively. Due to space constraint, we refer interested readers to~\cite{lu2016h} for more details on how to estimate the coreness values via the node index. 
As shown in Algorithm~\ref{alg:hindexestimate}, in each iteration, each node receives the estimated coreness values of its neighbors from the previous iteration, estimates its own coreness value, and sends the new estimation to its neighbors for the next iteration. 
The estimation will converge to the exact coreness value, which is proven by~\citet{montresor2012distributed}, 
and the decomposition accomplishes when the estimated coreness values for all nodes become stable.

\citet{mandal2017distributed} implement the distributed $k$-core decomposition algorithm on top of the GraphX platform, which is a popular distributed graph processing engine based on Apache Spark. However, since the resilient distributed datasets (RDDs) of Spark are immutable, the implementation in GraphX needs to reconstruct the RDDs frequently and suffers from heavy communication cost. So, the efficiency is insufficient for many industrial-scale graphs. 

PSGraph~\cite{jiang2020psgraph} is a graph processing platform that supports exetremly large-scale graph. PSGraph is also built upon Spark in order to be production-friendly. However, unlike GraphX, PSGraph leverages the power of parameter server to update the estimated coreness values so that it does not need to reconstruct the graph data repeatedly. Therefore, PSGraph is much more efficient than GraphX and supports much larger scale of graphs with up to billions of edges. 
Nevertheless, since it stores the entire graph during the $k$-core decomposition task, extremely huge graphs cannot be supported under the resource limitation of our productive environment. 

\section{Our Divide-And-Conquer Strategy}
\label{sec:algorithm}

In this section, we introduce a brand new solution on large scale $k$-core decomposition, namely \emph{DC-$k$Core}. 

\subsection{Overview}
\begin{figure}[!t]
  \centering
  \includegraphics[width=\linewidth]{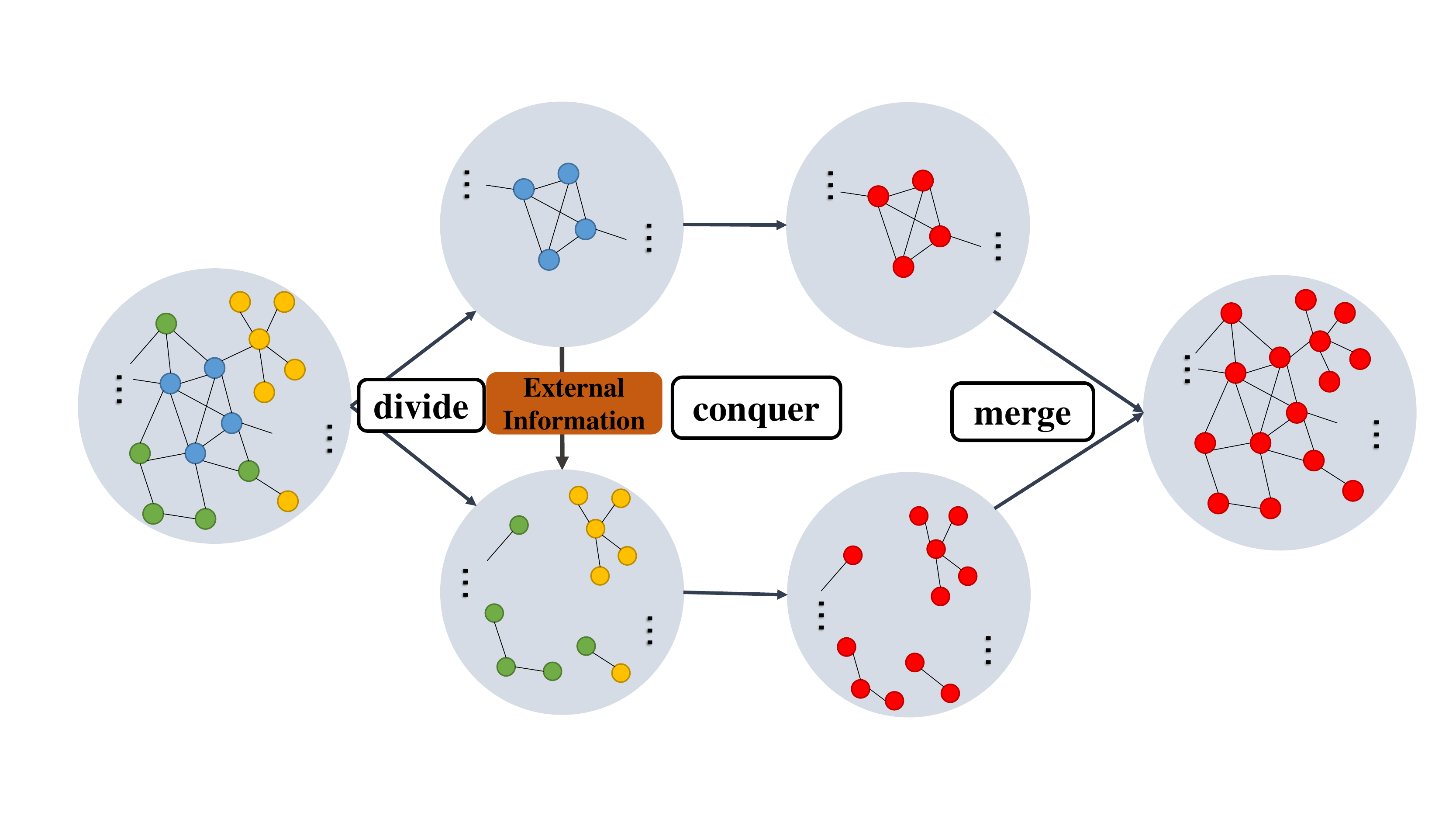}
  \caption{Overview of the divide-and-conquer strategy for $k$-core decomposition.}
  \label{fig:overview}
\end{figure}

As introduced in Section~\ref{sec:intro}, processing a huge graph in one shot is sub-optimal, either due to the resource limitation or the performance degradation. Hence, we propose to split the graph into different parts and process each part individually. 
Figure~\ref{fig:overview} depicts the overview of our work. 
Given a pre-defined threshold $k$, we divide the original graph into two parts, one for the nodes with coreness values not less than $k$ and the other for the remaining nodes. To ensure the correctness, some external information is generated to aid the decomposition, as described later. Finally, the coreness values computed by the two subgraphs are merged together. 

\subsection{The Dividing Step}

\subsubsection{The \textit{Exact-Divide} Strategy}
According to Definition~\ref{def:kcore}, nodes with coreness values smaller than $k$ will never affect the calculation of $k$-core subgraph. Thus, if we extract the $k$-core subgraph $G_k$ out of $G$ and execute decomposition on $G_k$ directly, we can get part of the coreness values of $G$. So, given a threshold $k$, the \textit{Exact-Divide} strategy divides the original graph into two parts --- the $k$-core subgraph $G_k$ and the rest of the graph $G_{!k}$, and apply decomposition to the two subgraphs individually in the conquer step. 

Nevertheless, although the coreness values of nodes in $G_k$ can be obtained directly by decomposing, 
for the rest of the graph, i.e., $G_{!k}$, the removal of nodes with large coreness values will inevitably affect the calculation of coreness values of this part. To correctly compute the coreness values in $G_{!k}$, we propose to make use of the external information generated from $G_k$, as defined in Definition~\ref{def:external_info}.

\begin{figure}[!t]
  \centering
  \includegraphics[width=\linewidth]{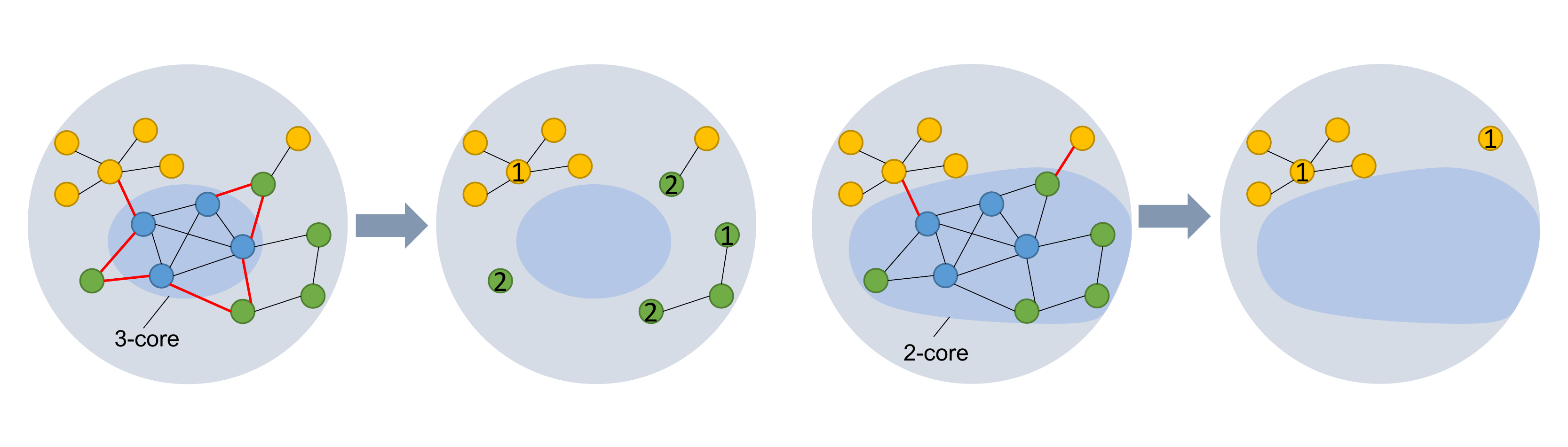}
  \caption{Examples of external information generating for $G_{!k}$ for $k=2$ (left) and $k=3$ (right), respectively. Each node in $G_{!k}$ counts its neighbors in $G_k$.}
  \label{fig:external_info}
\end{figure}

\begin{definition}
\label{def:external_info}
Given $G(V,E)$ and $G_k$, for node $v\in V_{!k}$, the external information of $v$ is defined as $\mathcal{E}(v)=|\{u: u \in \mathcal{N}_G(v) \land u \in V_k\}|$.
\end{definition}

Intuitively, the external information for $v$ summarizes the number of nodes connected with $v$ in $G_k$, which perfectly replaces the subgraph $G_k$ in the calculation of coreness values of nodes in $G_{!k}$ so that we can ensure the correctness of the $k$-core decomposition on $G$. We will show detailed proof of the correctness in Section~\ref{sec:conquerproof}. As illustrated in Figure~\ref{fig:external_info}, the process of external information generation is straightforward --- for each node in $G_{!k}$, we count its neighbors in $G_k$ as the corresponding external information. 

\begin{figure}[!t]
  \centering
  \includegraphics[width=0.75\linewidth]{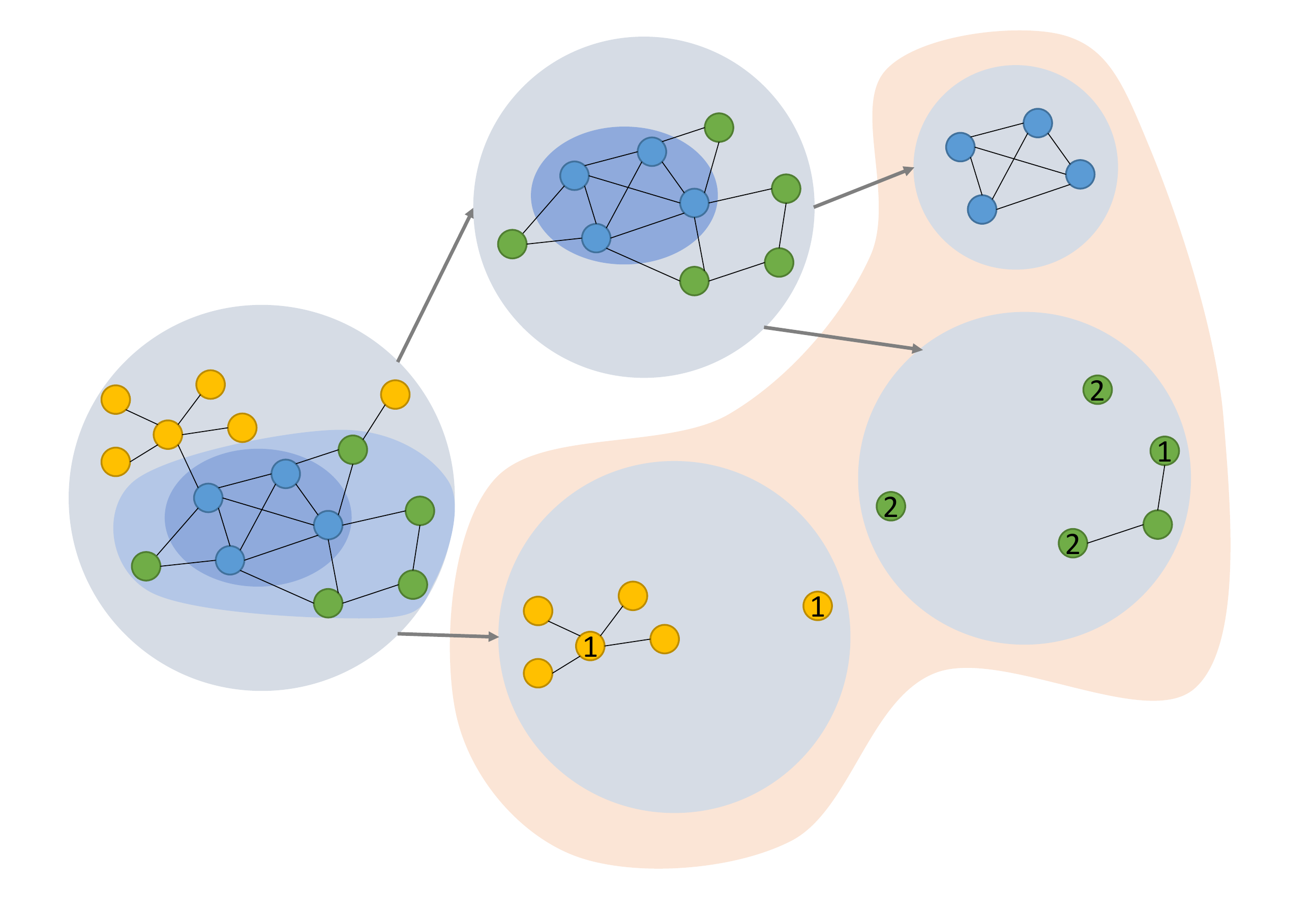}
  \caption{An example of dividing a graph into three parts via the \textit{Exact-Divide} strategy. The numbers represent the external information of the correspoding nodes.}
  \label{fig:divide_sample}
\end{figure}

In summary, by using the \textit{Exact-Divide} strategy, we divide the graph $G$ into two parts, i.e., $G_k$ and $G_{!k}$. For $G_k$, we can directly decompose it, whilst for $G_{!k}$, the external information is generated to aid the decomposition. 
Moreover, if the scale of $G_k$ is still too large, we can continue using the \textit{Exact-Divide} strategy on $G_k$ to divide $G$ into more parts, as illustrated in Figure~\ref{fig:divide_sample}. With the external information, we can calculate the coreness values for each part independently with correctness guarantees on all parts. As a result, huge graphs can be supported within limited resource constraints. 

\subsubsection{The \textit{Rough-Divide} Strategy}
As aforementioned, with the \textit{Exact-Divide} strategy, the graph $G$ is divided into $G_k$ and $G_{!k}$, which represent the $k$-core subgraph and the rest part, respectively. However, it requires to extract the the $k$-core subgraph exactly, which is non-trivial for many large graphs in practice. In order to reduce the time consumption of the dividing process, we propose a heuristic strategy called the \textit{Rough-Divide} strategy. 

In \cite{esfandiari2018parallel} an approximate k-core decomposition is proposed based on sampling. It divides the graph by sampling edges, and provides a good approximation of coreness. This sampling method is complex and unsuitable for our situation. In our \textit{Rough-Divide}, the dividing efficiency is the only concern, thus we provide an easy straight-forward dividing strategy. 

\begin{figure}[!t]
  \centering
  \includegraphics[width=0.8\linewidth]{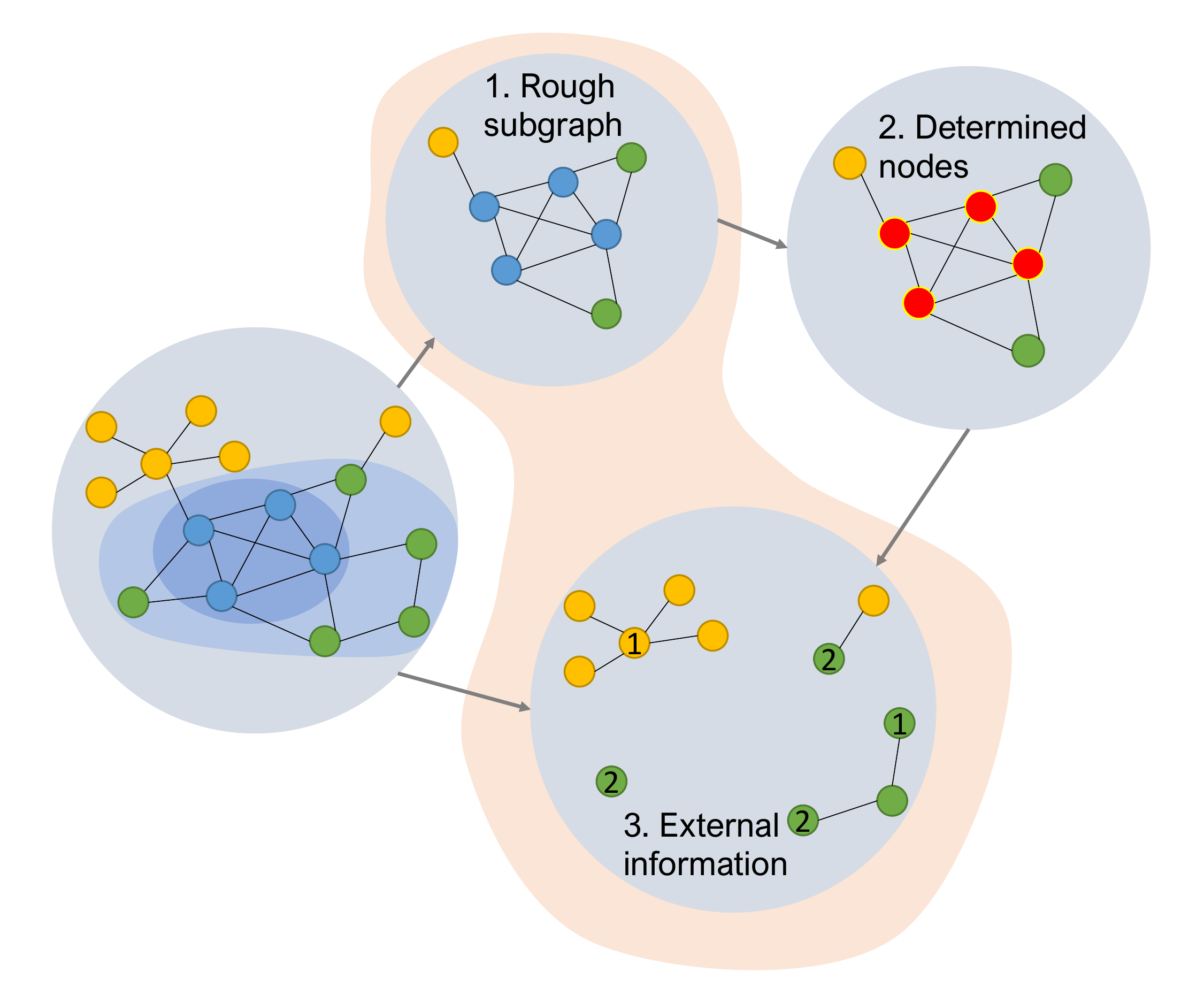}
  \caption{An example of the \textit{Rough-Divide} strategy. A rough subgraph $G_{\hat{k}}$ is extracted and decomposed. Then the nodes in $G_{!k}$ can be determined and the corresponding external information can be generated for the decomposition of $G_{!k}$.}
  \label{fig:seqdivide_sample}
\end{figure}

Unlike the \textit{Exact-Divide} strategy, the \textit{Rough-Divide} strategy does not need to extract the exact $k$-core subgraph $G_k$. Instead, an approximate, rough subgraph $G_{\hat{k}}$ is considered, where $G_{\hat{k}}$ is a superset of the $k$-core subgraph $G_k$, i.e., every edge in $G_k$ belongs to $G_{\hat{k}}$ as well. Obviously, the extraction of $G_{\hat{k}}$ is much more efficient than $G_k$. 
For instance, we can extract the nodes whose degrees are at least $k$, i.e., $G_{\hat{k}} = (C, E|C)$ where $C = \{v|\forall v\in V, deg_G(v) \geq k\}$.
Then the two parts are processed individually, as described below. 

For the first part, i.e., $G_{\hat{k}}$, we observe that since it is the superset of $G_k$, and according to Definition~\ref{def:kcore}, nodes with coreness less than $k$ have zero impact on the calculation of the $k$-core subgraph, decomposing $G_{\hat{k}}$ will yield a superset of the results obtained by decomposing $G_k$.  
As a result, there are two steps to achieve the $k$-core subgraph $G_k$: (i) apply decomposition on $G_{\hat{k}}$, and (ii) remove the nodes with coreness values smaller than $k$ in $G_{\hat{k}}$.
Moreover, the coreness values of all nodes in $G_k$ are achieved simultaneously. 

For the rest of the original graph, we adopt a similar procedure as \textit{Exact-Divide}. In other words, the external information of all nodes in $G_{!k}$ are generated to aid the computation. However, unlike \textit{Exact-Divide}, the external information generation of the \textit{Rough-Divide} strategy requires that we have already accomplished the processing of $G_{\hat{k}}$ so that we can find out all nodes for $G_{!k}$. So, the two parts are processed sequentially.
In practice, such a sequential processing is consistent with many real-world scenarios --- for whom does not have enough resources, different parts are expected to be processed one by one to reduce the peak amount of resources. 

We conclude this section with an example of the \textit{Rough-Divide} strategy in Figure~\ref{fig:seqdivide_sample}. Initially, we extract a rough subgraph $G_{\hat{k}}$ which contains all nodes and edges in $G_k$. Then we decompose $G_{\hat{k}}$ individually. As analyzed above, once the decomposition of $G_{\hat{k}}$ has been accomplished, the coreness values of all nodes that is no less than $k$ can be determined, which make up to be $G_k$. Finally, we can extract the rest part $G_{!k}$ and decompose it in a similar way as the \textit{Exact-Divide} strategy. As we will evaluate in Section~\ref{sec:experiment}, thanks to the acceleration in subgraph extraction, the \textit{Rough-Divide} strategy is more efficient than the \textit{Exact-Divide} strategy.

\subsection{The Conquer Step}
\label{sec:implementation}
Following the dividing step, the conquer step treats each subgraph as a sub-task and decompose it individually. As the decomposition of $G_k$ or $G_{\hat{k}}$ will not be influenced by the nodes in $G_{!k}$, we focus on the decomposition of $G_{!k}$ with the external information, including the correctness analysis and implementation details. 

\subsubsection{Theoretical Analysis}
\label{sec:conquerproof}
Now we explain how the external information could help the computation of coreness values for nodes in $G_{!k}$. By counting the number of neighbors in $G_{k}$, the upper bound of $\mathcal{E}(v)$ is $coreness(v)$, which is summarized in Corollary~\ref{def:coro:external}.

\begin{corollary}
\label{def:coro:external}
Given $G(V, E)$, $G_k(V_k, E_k)$ and $G_{!k}(V_{!k}, E_{!k})$, the upper bound of $\mathcal{E}(v)$ is $coreness(v)$.
\end{corollary}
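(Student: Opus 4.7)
The plan is to prove the bound by contradiction, leveraging the maximality condition in the definition of a $k$-core subgraph. Writing $c = coreness(v)$ for the fixed node $v \in V_{!k}$, suppose instead that $\mathcal{E}(v) \geq c+1$. I will show that this forces $v$ into the $(c+1)$-core, contradicting $coreness(v) = c$.

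First I would record the nesting property of $k$-core subgraphs: for any $j \leq k$, we have $V_k \subseteq V_j$, since the $k$-core, being an induced subgraph with minimum degree $\geq k \geq j$, is itself a candidate for the $j$-core and is therefore contained in the (maximal) $j$-core. Applying this with $j = c+1$ (which is legitimate because $v \in V_{!k}$ gives $c < k$, hence $c+1 \leq k$), every neighbor of $v$ that lies in $V_k$ also lies in $V_{c+1}$. Hence $v$ has at least $\mathcal{E}(v) \geq c+1$ neighbors inside $V_{c+1}$.

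Now consider the subgraph induced by $V_{c+1} \cup \{v\}$. The node $v$ has degree $\geq c+1$ there by the previous paragraph, and every original member of $V_{c+1}$ retains its degree from $G_{c+1}$ (possibly plus one, if it is a neighbor of $v$), so it still has degree $\geq c+1$. Thus $V_{c+1} \cup \{v\}$ induces a subgraph in which every node has degree at least $c+1$, so by Definition~\ref{def:kcore} this set must be contained in $V_{c+1}$. This forces $v \in V_{c+1}$, i.e.\ $coreness(v) \geq c+1$, contradicting $coreness(v) = c$. Therefore $\mathcal{E}(v) \leq coreness(v)$.

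I do not expect any real obstacle here; the proof is essentially one application of maximality. The only point worth stating cleanly is the nesting $V_k \subseteq V_{c+1}$, which is what makes the external information (counted against $V_k$) comparable to a bound expressed in terms of the $(c+1)$-core. Everything else is a one-line degree count in the enlarged induced subgraph.
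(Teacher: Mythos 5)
Your proof is correct and follows the same contradiction argument as the paper: assuming $\mathcal{E}(v) > coreness(v)$ forces $v$ into the $(coreness(v)+1)$-core, contradicting the definition of coreness. The only difference is that you explicitly justify the key implication --- via the nesting $V_k \subseteq V_{c+1}$ and the maximality of the $(c+1)$-core applied to the induced subgraph on $V_{c+1}\cup\{v\}$ --- whereas the paper merely asserts that having $coreness(v)+1$ neighbors in $G_k$ raises the coreness, so your write-up is the more complete version of the same argument.
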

\begin{proof}
By contradiction. Given a node $v \in V_{!k}$, assume that $\mathcal{E}(v) > coreness(v)$, which means $v$ has at least $coreness(v) + 1$ neighbors belonging to $G_k$, and since $v \not\in G_k$, thus $k > coreness(v)$, that means according to Definition~\ref{def:coreness}, the coreness value of $v$ is at least $coreness(v)+1 \neq coreness(v)$, a contradiction.
\end{proof}

External information for a node takes the same effect as if the missing edges are still connected. For nodes with corenss $k$, only the number of neighbors with coreness larger or equal to $k$ matters.
Based on Corollary~\ref{def:coro:external}, we formalize the correctness of \textit{Exact-Divide} via Corollary~\ref{coro:correct_ness}. 

\begin{corollary}
\label{coro:correct_ness}
Given $G(V,E)$, $G_{!k}(V_{!k}, E_{!k})$ and the external information $\mathcal{E}$, coreness value of $v\in V_{!k}$, $Core_{!k}(v)$, generated by executing a bottom-up algorithm on $G_{!k}$ will converge to the correct coreness $C(v)$ of $v$ in $G$. 
\end{corollary}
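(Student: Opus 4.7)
My plan is to reduce the corollary to the convergence theorem of Montresor and Pecori for the standard $h$-index iteration, via an auxiliary graph that simulates the external information by genuine edges. The two key ingredients are Corollary~\ref{def:coro:external} (which gives $\mathcal{E}(v) \le coreness(v) < k$ for every $v \in V_{!k}$) and the observation that the bottom-up algorithm on $G_{!k}$ equipped with $\mathcal{E}$ can be viewed as running Algorithm~\ref{alg:hindexestimate} on $v$'s neighbour list augmented by $\mathcal{E}(v)$ phantom entries pinned to some large constant $K$ (take $K = |V|$), initialised at $Core_{!k}^{0}(v) = deg_G(v)$.

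First, I would build an auxiliary graph $G^{\star}$ by attaching to $G_{!k}$ a fresh $(K+2)$-clique $\mathcal{C}$ and inserting, for each $v \in V_{!k}$, $\mathcal{E}(v)$ edges from $v$ to distinct clique nodes. Since $\mathcal{C}$ is a $(K+1)$-core and $K+1 > k_{max}$, a standard peeling argument (in the spirit of the BZ algorithm) shows that throughout the processing of any $c$-core with $c<k$ the clique remains untouched and contributes exactly $\mathcal{E}(v)$ to each $v$'s current degree, mirroring the role played by $\mathcal{N}_{G_k}(v)$ in $G$; consequently $coreness_{G^{\star}}(v) = coreness_G(v)$ for every $v \in V_{!k}$. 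Second, I would prove by induction on the iteration index $t$ that the standard $h$-index iteration on $G^{\star}$ produces the same estimates on $V_{!k}$ as the simulated iteration on $G_{!k}$ with external information. The base case is immediate since both initialisations equal $deg_G(v)$, and in the inductive step the clique-node estimates on $G^{\star}$ remain $\ge K+1$ whereas the simulated phantom value is $K$, so on both sides the top $\mathcal{E}(v)$ positions of $v$'s sorted neighbour list strictly exceed every achievable $h$-index (which is itself bounded above by $|V|-1$). A short $h$-index invariance lemma---the $h$-index of a multiset is unchanged when any block of entries strictly above the $h$-index is replaced by another block also strictly above it---then closes the inductive step.

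Combining these two pieces with Montresor and Pecori's convergence theorem applied to $G^{\star}$ yields $\lim_{t \to \infty} Core_{!k}^{t}(v) = coreness_{G^{\star}}(v) = coreness_G(v) = C(v)$, as required. The most delicate point, and the one I expect to demand the most care, is the $h$-index invariance lemma at intermediate iterations: estimates begin at node degrees and can considerably exceed $k$ early in the run, so the proof must combine Corollary~\ref{def:coro:external}---via $\mathcal{E}(v) \le coreness(v) < k \le K$---with the monotonicity of the $h$-index iteration to certify that the phantom positions genuinely dominate at every step and not merely at the fixed point.
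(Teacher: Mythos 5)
Your argument is correct, but it takes a genuinely different route from the paper. The paper proves the corollary directly, by induction on the coreness value $C(v)$ of nodes in $V_{!k}$ (base cases $C(v)=0$ and $C(v)=1$, then a contradiction argument for $1<C(v)<k$ showing that a node whose estimate stays above $s=C(v)$ would have to possess $s+1$ neighbours in the $(s{+}1)$-core), essentially re-running Montresor and Pecori's convergence proof with $\mathcal{E}(v)$ folded in. You instead reduce to their theorem as a black box via a gadget: attach a large clique to $G_{!k}$, wire each $v$ to $\mathcal{E}(v)$ clique nodes, show coreness is preserved on $V_{!k}$, and show that Algorithm~\ref{alg:core_external} is exactly Algorithm~\ref{alg:hindexestimate} run on the augmented neighbour lists (your $h$-index invariance lemma, justified by the bound $K+1>|V|-1\geq$ any attainable $h$-index at any iteration, is the right way to handle the intermediate iterations where estimates still exceed $k$). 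Your route is more modular and arguably more rigorous --- it inherits convergence for free and makes transparent why the $\mathcal{E}(v)$ offset in Algorithm~\ref{alg:core_external} is the correct modification --- at the cost of the auxiliary construction; the paper's route is self-contained and elementary but leaves several steps informal. One point to tighten in yours: the claim $coreness_{G^{\star}}(v)=coreness_{G}(v)$ is argued only for peeling thresholds $c<k$, which establishes the lower bound and the upper bound when $coreness_{G}(v)+1<k$, but for a node with $coreness_{G}(v)=k-1$ you also need that no subset $S\subseteq V_{!k}$ survives the threshold-$k$ peeling in $G^{\star}$; this follows by observing that such an $S$ would make $S\cup V_{k}$ a subgraph of $G$ with minimum degree $k$, contradicting $S\cap V_{k}=\emptyset$ and the maximality of $G_k$ in Definition~\ref{def:kcore}.
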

\begin{proof}
According to Corollary~\ref{def:coro:external}, we have $\mathcal{E}(v)<k$. And the coreness values of all nodes in $G_{!k}$ are smaller than $k$.
The proof is by induction on the coreness $C(v)$ of node $v \in V_{!k}$.
\begin{itemize}
    \item $C(v)=0$. In this case, $v$ is isolated in both $G$ and $G_{!k}$, and the external information is $\mathcal{E}(v) = 0$. The coreness value $Core_{!k}(v)$ is initialized by $deg_{G_{!k}}(v) + \mathcal{E}(v) = 0$. Thus $C(v) = 0 = Core_{!k}(v)$. The protocol terminates at the very beginning of node $v$.
    
    \item $C(v)=1$. By contradiction, assume that $C(v) = 1$ but $Core_{!k}(v) >= 2$. We have $\mathcal{E}(v) \leq 1$. \\
    If $\mathcal{E}(v) = 1$, node $v$ has only one neighbor belongs to the $k$-core subgraph. However, since $Core_{!k}(v) >= 2$, which means that there exists a neighbor of $v$ in $G$, whose coreness is less than $k$ but greater than 1, which leads to $C(v) > 1$, a contradiction.\\
    If $\mathcal{E}(v) = 0$, node $v$ has no neighbor belongs to the $k$-core subgraph. Since $Core_{!k}(v) >= 2$, which means that there exist two neighbors of $v$ in $G$, whose coreness is less than $k$ but greater than 1, which leads to $C(v) > 1$, a contradiction.
    
    \item \emph{Induction step ($1<C(v)<k$)}. By contradiction, assume that there is a node $v$ satisfying $C(v)=s$ and $Core_{!k}(v)>s$ for all iterations. We have $\mathcal{E}(v) \leq s$ and $Core_{!k}(v) > s$, which means that there exists at least one neighbor of $v$ in $G$, whose coreness is less than $k$ and larger than $s$. Therefore, $v$ have at least $s+1$ neighbors belonging to the $(s+1)$-core subgraph, i.e., $C(v) \geq s+1$, which forms a contradiction.
\end{itemize}
\end{proof}

Now that we can correctly compute the coreness values for all subgraphs, thus the final results can be achieved by merging all the results of subgraphs.

\begin{figure}[!t]
  \centering
  \includegraphics[width=0.75\linewidth]{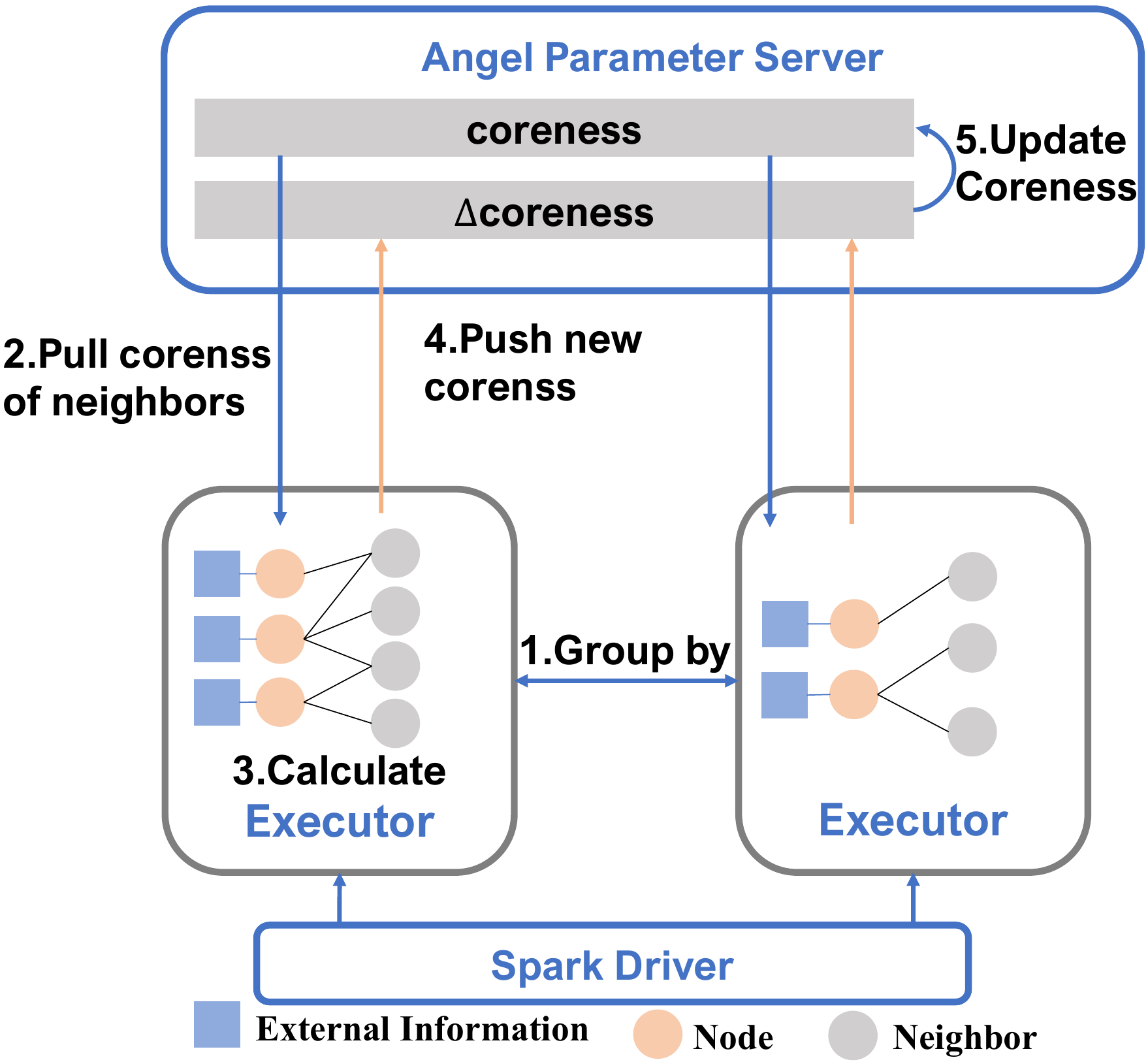}
  \caption{The overview of our implementation of distributed $k$-core decomposition.}
  \label{fig:impl}
\end{figure}

\subsubsection{Implementation}
Since the calculation of coreness values for nodes in $G_{!k}$ requires the external information generated from $G_k$, directly applying decomposition on $G_{!k}$ via Algorithm~\ref{alg:hindexestimate} will yield incorrect results. As a result, we develop Algorithm~\ref{alg:core_external}, which estimates the coreness values with the help of external information. The essential idea is to complement $\mathcal{E}(v)$ to the estimation for the edges connected to $G_k$. Specifically, if node $v$ has no neighbors in $G_k$, we denote $\mathcal{E}(v) = 0$ for simplicity. 

Similar as \emph{PSGraph}, we implement \emph{DC-$k$Core} on top of Apache Spark and the Angel parameter server. As shown in Figure~\ref{fig:impl}, our distributed $k$-core decomposition has the following steps:

\begin{enumerate}
\item \textit{Data Loading}. The graph is loaded and distributed into several partitions in the vertex-centric manner. To be specific, for each partition, we co-located the external information of corresponding nodes in the same RDD partition. 

\item \textit{Pull Coreness}. In each iteration, we sample a batch of nodes and pull the estimated coreness values of their neighbors from the parameter servers. 

\item \textit{Estimate Coreness}. Based on the estimated coreness values of neighbors and the external information, we estimate the coreness values for all nodes in the current batch via Algorithm~\ref{alg:core_external}. 

\item \textit{Push Coreness}. For nodes whose estimated coreness values are different from the previous iteration, we push the updated estimation to parameter servers. 

\item \textit{Update Coreness}. Upon receiving the newly estimated coreness, the parameter servers update the estimation in place for the next iteration.
\end{enumerate}

Step (2)-(5) repeat iteratively until the estimation of all nodes become stable, which finalizes the $k$-core decomposition procedure. 

\begin{algorithm}[!t] 
\caption{Procedure to estimate the coreness value for a node $v$ with external information in one iteration. We denote $\mathcal{E}(v) = 0$ if node $v$ has no neighbors in $G_k$.} 
\label{alg:core_external} 
\begin{algorithmic}[1]
\REQUIRE ~~\\
$\mathcal{N}(v)$: The neighbors of node $v$;\\
$\mathcal{E}(v)$: The external information of node $v$;
\ENSURE ~~\\
Estimation for the coreness value of node $v$;

\STATE $Cores \leftarrow$ Estimation of coreness values of all nodes in $\mathcal{N}(v)$ from the previous iteration;
\STATE $Cores \leftarrow Cores.sorted.reverse$
\STATE $i \leftarrow 0$
\STATE $C_v \leftarrow \mathcal{E}(v) + Cores.length$

\WHILE{$i < Cores.length$}
    \IF{$Cores(i) \geq \mathcal{E}(v) + i + 1$}
        \STATE $i \leftarrow i + 1$;
    \ELSE
        \STATE $C_v \leftarrow \mathcal{E}(v) + i$;
        \STATE break while;
    \ENDIF
\ENDWHILE
\RETURN $C_v$;

\end{algorithmic}
\end{algorithm}


\section{Evaluation in the Real World}
\label{sec:experiment}

As introduced in Section~\ref{sec:intro}, \emph{DC-$k$Core} has been integrated into the production pipeline of our industrial partner and works as a powerful graph analytics toolkit for many real-world applications. In this section, we conduct experiments on both public graph dataset and some super-large-scale use cases to evaluate \emph{DC-$k$Core} empirically. There are algorithms deal with large-scale graphs with secondary storage \cite{cheng2011efficient}\cite{khaouid2015k}, but our method takes memory as the only storage, the comparison with these algorithms is unnecessary.

\subsection{Experimental Setup}
\subsubsection*{Datasets}
As summarized in Table~\ref{tab:datasets}, three graph datasets are used in our experiments, including one public graph dataset and two industrial-scale graph datasets. The \textit{com-friendster}~\cite{yang2015defining} dataset is a public undirected graph dataset with 1.8B edges from Friendster, a social networking site where users can make friends with each other. In this dataset, each node represents an user and each edge represents the friend relationship between the two connected users. \textit{WX-15B} and \textit{WX-136B} are two graphs generated from a payment app of our industrial partner for risk control and advertising. The number of edges of these two industrial-scale datasets are up to 15 billion and 136 billion, respectively, where each edge represents an online payment transaction between two accounts. 

\subsubsection*{Competitors}
Prior to this work, there are two implementations of the distributed $k$-core decomposition algorithm deployed in the production pipeline of our industrial partner. To thoroughly evaluate the effectiveness of our work, we compare all these works, as listed below.
\begin{itemize}[leftmargin=*]
\item \emph{Spark-$k$Core}~\cite{mandal2017distributed}: An implementation of the distributed $k$-core decomposition algorithm on top of the Spark GraphX platform. However, it requires to reconstruct the graph repeatedly during the decomposition task. 
\item \emph{PSGraph}~\cite{jiang2020psgraph}: An implementation of the distributed $k$-core decomposition algorithm on top of Spark and the Angel platform. It leverages the power of the parameter server to update the graph information in place so that graph reconstruction is avoided. 
    
\item \emph{DC-$k$Core}: An implementation of our divide-and-conquer strategy. By default, we divide the original graphs into two parts via the \textit{Rough-Divide} strategy. Furthermore, we adapt \emph{PSGraph} to support decomposition of subgraphs with external information. 
\end{itemize}

\label{sec:datasets}
\begin{table}[!t]
    \caption{Description of datasets.}
    \label{tab:datasets}
    \begin{tabular}{ccccl}
        \toprule
        Dataset & \#Nodes & \#Edges & max k\\
        \midrule
        com-friendster & 65,608,366 & 1,806,067,135 & 304\\
        WX-15B & 646,408,482 & 15,179,911,593 & 401\\
        WX-136B & 2,226,845,928 & 136,588,315,957 & 1,179\\
        \bottomrule
    \end{tabular}
\end{table}

\subsubsection*{Environments}
All experiments are carried out on a productive cluster of 100 nodes. As stated in Section~\ref{sec:intro}, the maximum resources allocated for each task is constrained. To be specific, each node provisions 6 cores and 50GB of RAM for a single task. For \emph{PSGraph} and \emph{DC-$k$Core}, we allocate 4 nodes to work as the parameter servers (ps) by default. We use OpenBox~\cite{DBLP:conf/kdd/LiSZCJLJG0Y0021} to tune the hyper-parameters.

\subsubsection*{Setup for the threshold k}
To better balance the workload on different graph parts, the threshold $k$ is expected to divide graph into parts with similar number of edges. The selection details of $k$ are determined according to the structure information of graphs. Though situation varies with different graphs, for most socal graphs, it is observed that the nodes and edges decrease exponentially with the increase of $k$ in k-core subgraphs. Thus if one cannot obtain the structure information of the graph, we suggest $k$ for the $i$-th part among $P$ parts to be $Deg_{max} \times {(\frac{i}{P})}^2$, where $Deg_{max}$ is the max degree of the graph.

\subsection{Correctness}
First and foremost, we evaluate the correctness of our divide-and-conquer strategy by comparing the results with all competitors. 

On dataset \emph{com-friendster}, we run \emph{Spark-$k$Core}, \emph{PSGraph}, and \emph{DC-$k$Core}, respectively, and record the results of these three implementations. By comparing the coreness values one by one, these results are completely consistent. 
On dataset \emph{WX-15B}, since \emph{Spark-$k$Core} fails to support such a large scale of graphs, we only compare the results of \emph{PSGraph} and \emph{DC-$k$Core}, which are also consistent. 
The experimental results verify our theoretical analysis empirically. As a result, our divide-and-conquer strategy will not harm the correctness of the $k$-core decomposition tasks. 

\subsection{Scalability and Efficiency}
\begin{table}[!t]
    \caption{Comparison of end-to-end running time. We use the symbol ``-'' to indicate that the competitor fails to support the given workload.}
    \label{tab:capacity}
    \begin{tabular}{ccccl}
        \toprule
         & \emph{com-friendster} & \emph{WX-15B} & \emph{WX-136B} \\
        \midrule
        \emph{Spark-$k$Core} & $1.0h$ & - & - \\
        \emph{PSGraph} & $21min$ & $1.9h$ & - \\
        \emph{DC-$k$Core} & $21min$ & $1.4h$ & $27.5h$ \\
        \bottomrule
    \end{tabular}
\end{table}

The scalability is extremely important to large-scale graphs~\cite{fan2020graph,chen2021towards}.
To evaluate the scalability and efficiency of the proposed method, we compare \emph{Spark-$k$Core}, \emph{PSGraph}, and \emph{DC-$k$Core}, on all datasets, and record the total running time in Table~\ref{tab:capacity}. 
For \emph{DC-$k$Core}, we divide the dataset \emph{com-friendster} by the degree threshold of 80, \emph{WX-15B} by 100, and \emph{WX-136B} by 250, respectively. (Note: if a graph is divided by the degree threshold of $t$, it means nodes with coreness values larger than $t$ and nodes with coreness values not larger than $t$ are calculated separately in two subgraphs.)

\subsubsection*{Scalability}
As shown in Table~\ref{tab:capacity}, under the resource constraints, all the three competitors are able to handle the \emph{com-friendster} dataset. However, with the graph scale increasing, \emph{Spark-$k$Core} fails to support the \emph{WX-15B} dataset, 
whilst \emph{PSGraph} fails to process the \emph{WX-136B} dataset.
In contrast, by applying the divide-and-conquer strategy, \emph{DC-$k$Core} can support all the three datasets. 
To the best of our knowledge, this is the first work that supports $k$-core decomposition for super large graphs with more than 100 billion edges. 

\subsubsection*{Efficiency}
In terms of efficiency, both \emph{PSGraph} and \emph{DC-$k$Core} run much faster than \emph{Spark-$k$Core}. As aforementioned, arming with the parameter server architecture, we do not need to reconstruct the graphs constantly during the decomposition, boosting the performance. On the \emph{com-friendster} dataset, the time cost of \emph{PSGraph} and \emph{DC-$k$Core} are comparable, which verifies that our divide-and-conquer also supports small scale graphs well. On the larger dataset \emph{WX-15B}, \emph{DC-$k$Core} takes less time to accomplish the decomposition compared with \emph{PSGraph}. The performance improvement comes from the reduction in graphs. When we divide a graph into subgraphs, each sub-task focuses on a smaller number of nodes, therefore the total number of updates of coreness values becomes smaller. As we will analyze in Section~\ref{sec:interpretability}, it reduces the total communication by applying the divide-and-conquer strategy. Consequently, \emph{DC-$k$Core} is not only more scalable, but also more efficient on processing large-scale graphs than the existing works. 

\subsection{Interpretability}
\label{sec:interpretability}
Extensive experiments have demonstrated that our method is able to improve both the efficiency and scalability of $k$-core decomposition. In order to anatomize its effectiveness, we record various factors during the execution, including the time cost of decomposition and amount of communication, and analyze the results in this section. 
\label{sec:inter_communication}
\begin{figure}[!t]
  \centering
  \includegraphics[width=0.8\linewidth]{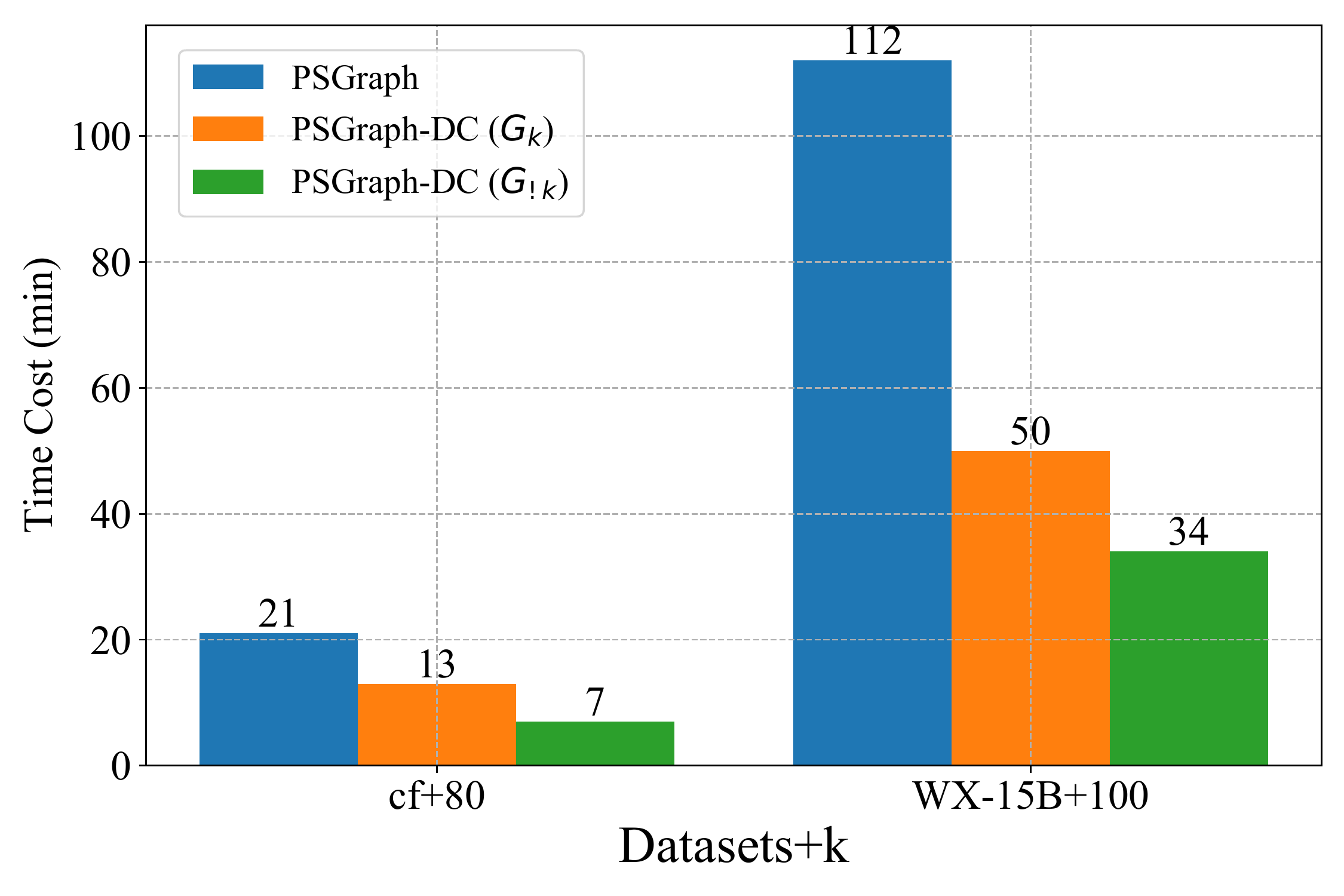}
  \caption{Time cost of decomposition of \emph{PSGraph} and \emph{DC-$k$Core}. For \emph{DC-$k$Core}, we show the time cost of decomposing the two subgraphs $G_{\hat{k}}$ and $G_{!k}$, respectively.}
  \label{fig:iterationtime}
\end{figure}

\begin{figure}[!t]
  \centering
  \includegraphics[width=0.8\linewidth]{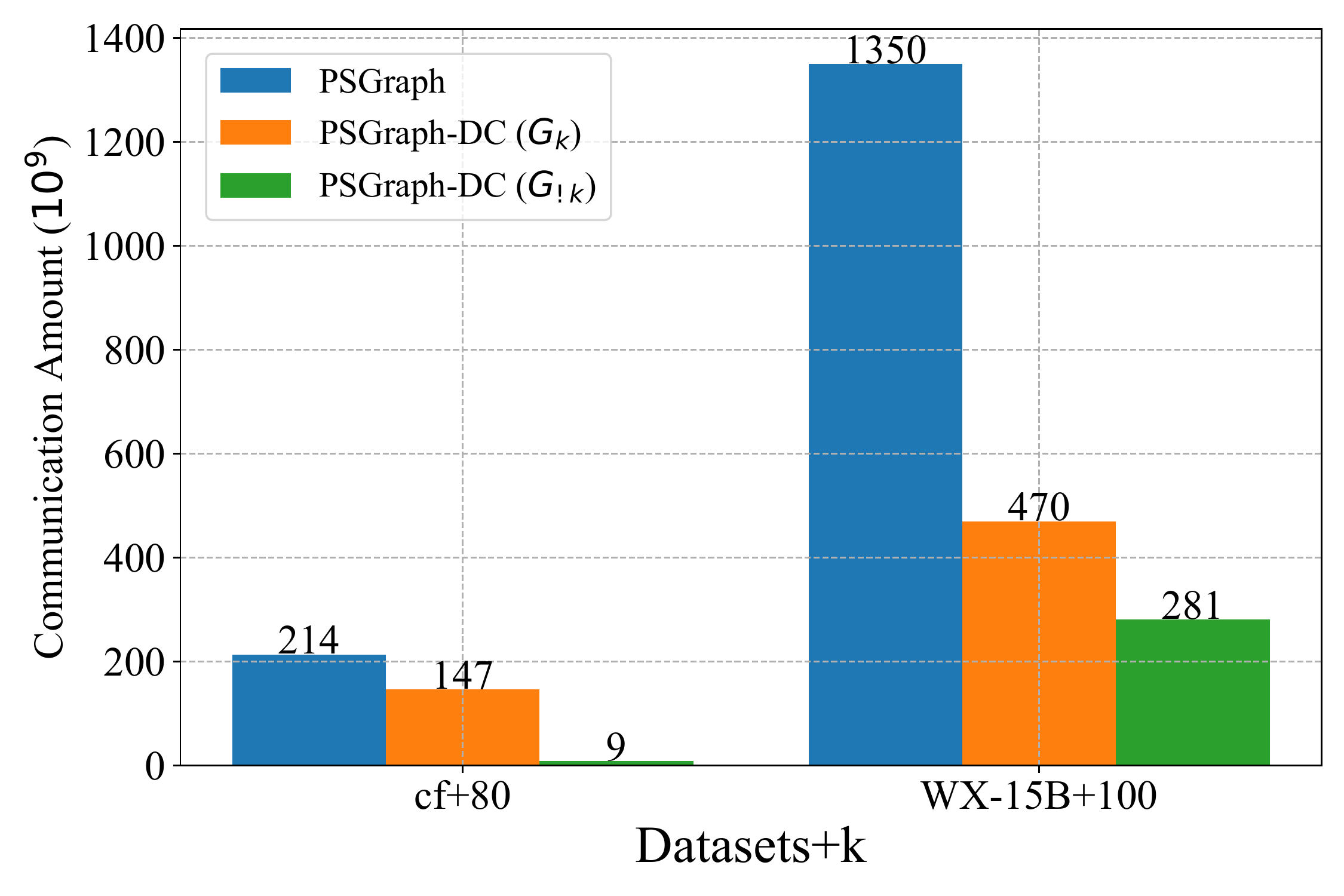}
  \caption{Communication amount of \emph{PSGraph} and \emph{DC-$k$Core}. For \emph{DC-$k$Core}, we show the communication amount of decomposing the two subgraphs $G_{\hat{k}}$ and $G_{!k}$, respectively.
  }
  \label{fig:communications}
\end{figure}

Figure~\ref{fig:iterationtime} summarizes the time cost of performing $k$-core decomposition on the original graph $G$ (i.e., time cost of \emph{PSGraph}) and on the two subgraphs $G_{\hat{k}}$ and $G_{!k}$, respectively. Although the divide-and-conquer requires more than one decomposition task, the total time cost is smaller than decomposing the original graph directly. For instance, for the \emph{WX-15B} dataset, the time cost for $G_{\hat{k}}$ and $G_{!k}$ are 50 minutes and 34 minutes, respectively. The total time cost is 84 minutes, which is 1.3$\times$ faster than decomposing $G$ directly. 

To further analyze the decrease in total time cost, we record the amount of communication of each decomposition task, as shown in Figure~\ref{fig:communications}. To conclude, the total communication amount of \emph{DC-$k$Core} is smaller than \emph{PSGraph} on both datasets. 
In fact, this is not surprising for two reasons. 
First, since when the number of edges is reduced in each subgraph, the number of updated nodes (nodes whose coreness values are updated) in each iteration is also reduced, which eventually results in the reduction in communication. 
Second, for $G_{!k}$, the edges connected to $G_k$ are cut off and the neighborhood is summarized as the external information. During the decomposition, nodes in $G_k$ will not be considered or updated, bringing an extra reduction in communication. This is also consistent with the results in Figure~\ref{fig:communications} --- the communication amount of decomposing $G_{!k}$ is much smaller than $G_{\hat{k}}$. 
Eventually, by breaking the original graph into smaller ones, our divide-and-conquer strategy reduces the communication overhead so that it is more efficient that decomposing the original graph directly.

\subsection{Comparison of Dividing Strategies}

\begin{figure}[!t]
  \centering
  \includegraphics[width=0.8\linewidth]{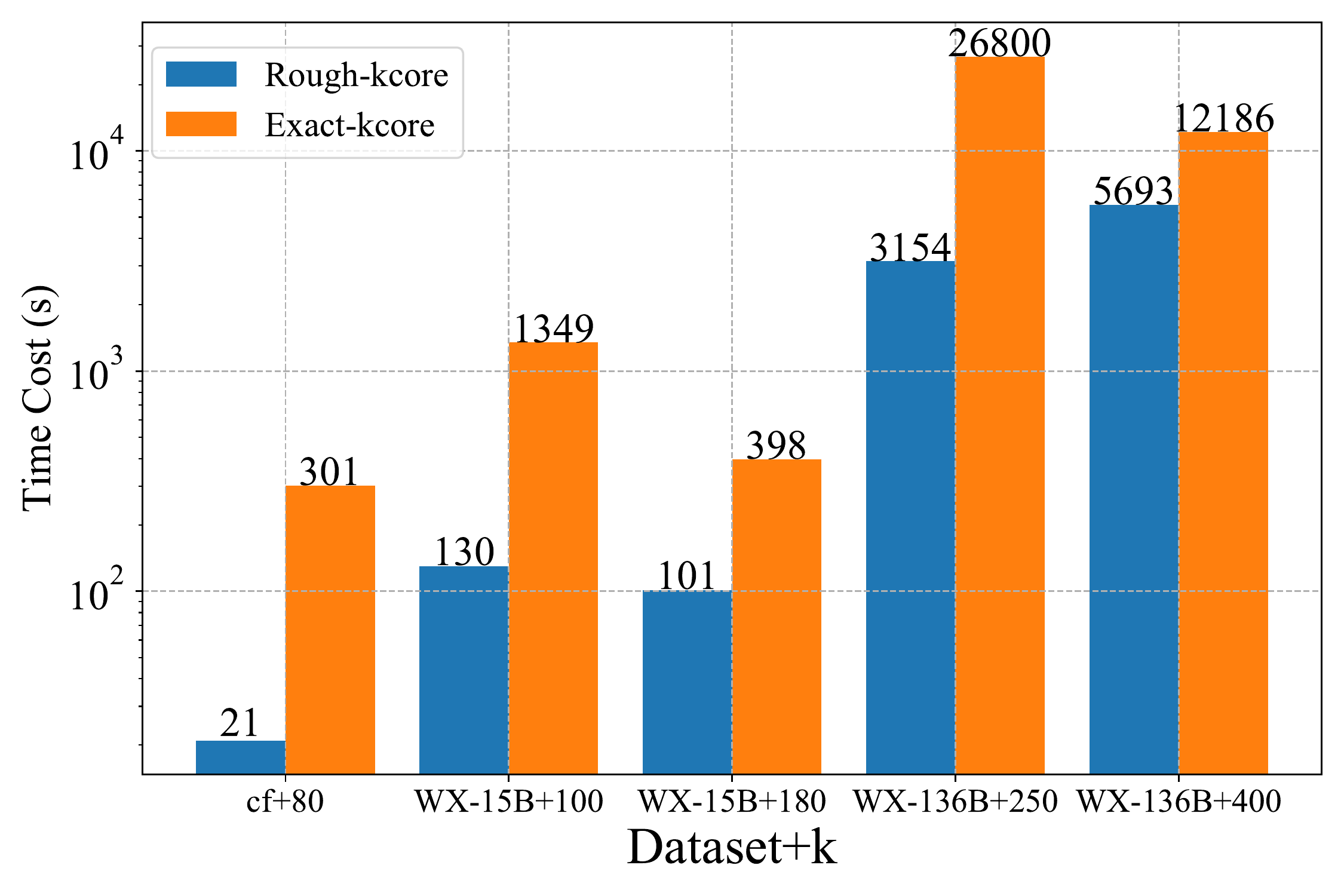}
  \caption{Time cost in seconds (at log scale) for the generation of the rough $k$-core subgraph $G_{\hat{k}}$ and the exact $k$-core subgraph $G_k$. Generating $G_{\hat{k}}$ is $3.7-14.3\times$ faster than $G_k$.}
  \label{fig:roughkcore}
\end{figure}

In Section~\ref{sec:algorithm}, we propose two dividing strategies, i.e., \textit{Exact-Divide} and \textit{Rough-Divide}. 
To compare their efficiency, we record the time cost of subgraph extraction and show the results in Figure~\ref{fig:roughkcore}.

Each group on the X-axis represents the dataset and the degree threshold for subgraph extraction. For example, ``10billion+100'' stands for the generation of rough or exact 100-core subgraph for the \emph{WX-15B} dataset, ``100billion+250'' for the 250-core subgraph of \emph{WX-136B}, and ``cf+80'' for the 80-core subgraph of \emph{com-friendster}, respectively. 
In short, the \textit{Rough-Divide} strategy is $3.7-14.3\times$ faster than the \textit{Exact-Divide} strategy in terms of subgraph extraction. For instance, for ``WX-136B+250'', the rough approach saves 6.5 hours compared with the exact venue. As a result, the \textit{Rough-Divide} strategy is more suitable for huge graphs in practice. 

\subsection{Influence of the Number of Divided Parts}
As discussed in Section~\ref{sec:algorithm}, by applying the divide-and-conquer strategy recursively, a graph can be divided into more than two parts. To evaluate the efficiency improvement of dividing into multiple parts, and to study the overhead of subgraph extraction and external information generation, we conduct experiments to vary the number of parts from 2 to 4 on the \emph{com-friendster} and \emph{WX-15B} datasets. Specifically, for the \emph{com-friendster} dataset, we set the dividing thresholds as 80, 100, and 150, respectively, whilst for the \emph{WX-15B} dataset, the dividing thresholds are set as 80, 100, and 180, respectively.

\subsubsection*{Communication Amount on Multi-parts}
We first analyze the impact on the total amount of communication when the number of dividing parts varies. As shown in Figure~\ref{fig:multiparts}, the total amount of communication of \emph{DC-$k$Core} is always smaller than \emph{PSGraph}, regardless of the number of parts, which shows that the divide-and-conquer strategy can indeed reduce the communication overhead. 

\begin{figure}[!t]
  \centering
  \includegraphics[width=0.8\linewidth]{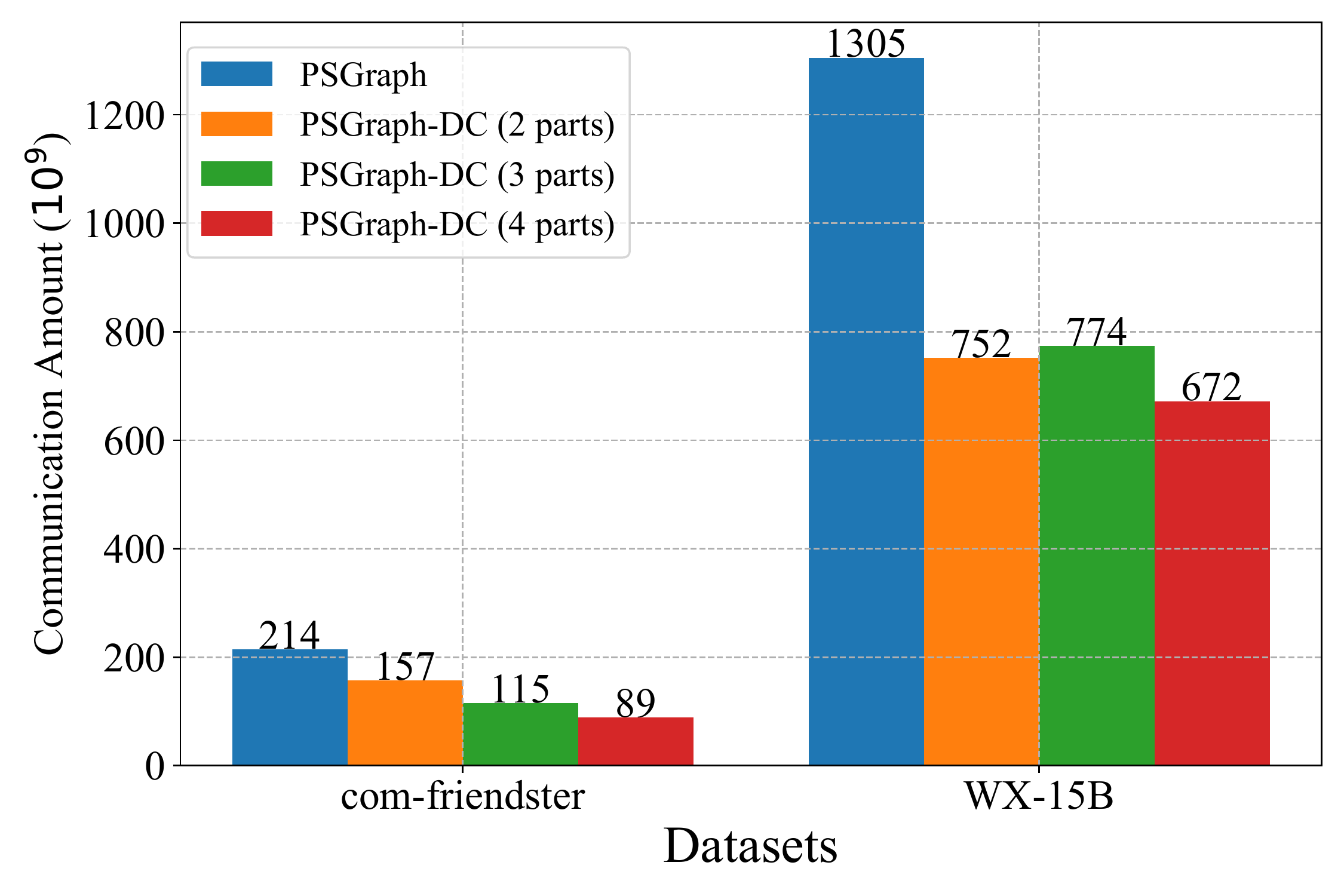}
  \caption{Total amount of communication of \emph{PSGraph} and \emph{DC-$k$Core}. For \emph{DC-$k$Core}, we consider dividing into 2-4 parts (subgraphs) and report the sum of communication amount of all parts.
  }
  \label{fig:multiparts}
\end{figure}

\begin{figure}[!t]
  \centering
  \includegraphics[width=0.8\linewidth]{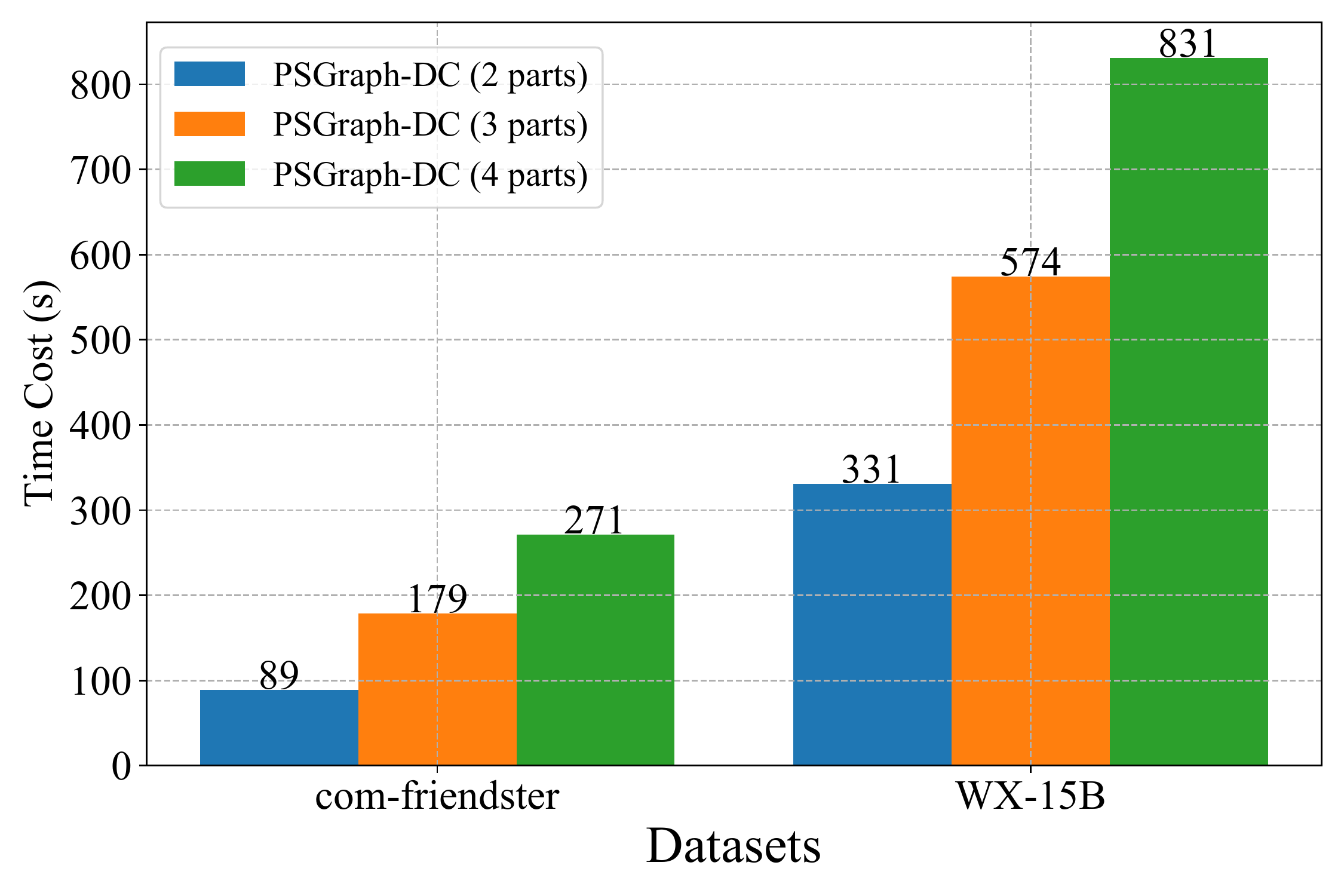}
  \caption{Time cost of preprocessing in minutes, including the time cost of subgraph extraction and external information generation. When the number of parts increases, the preprocessing overhead enlarges.
  }
  \label{fig:mknbrtable}
\end{figure}

For \emph{com-friendster}, the total communication amount decreases when the number of parts increases. For \emph{WX-15B}, although increasing the number of parts from 2 to 3 causes a slight increase in communication, dividing it into 4 parts achieves the smallest communication amount. It is consistent with our analysis in Section~\ref{sec:interpretability} that the reduction in the number of edges brought by graph division can significantly alleviate the communication overhead.

\subsubsection*{Preprocessing Cost}
Next, we record the time cost of preprocessing in Figure~\ref{fig:mknbrtable}, including the extraction of subgraphs and the generation of external information. The results show that when we divide the original graph into more parts, the overhead of preprocessing increases in the meantime. For instance, for the \emph{WX-15B} dataset, the preprocessing cost becomes $2.5\times$ larger when we increase the number of parts from 2 to 4. 
It shows that there exists a trade-off between the communication amount and preprocessing overhead --- a large number of parts will lead to smaller communication amount but heavier preprocessing cost, and vice versa. It would be an interesting topic to investigate how to configure the optimal number of parts for a given workload. We will leave it as our future work.

\section{Conclusion and Future Works}
\label{sec:conclusion}

In this paper, we propose a novel divide-and-conquer strategy for the $k$-core decomposition of huge graphs under limited resource constraints. To reduce the resource requirement, we devise graph division strategies to break the huge graph into smaller subgraphs. To ensure the correctness, we develop an algorithm that decomposes the subgraph with the help of external information. Empirical results show that our work is able to support much larger scale of graphs and is more efficient than the existing works. 

As a possible future direction, we would like to generalize the divide-and-conquer strategy to more graph processing algorithms to develop a robust framework for industrial-scale graph datasets.

\begin{acks}
  This work is supported by NSFC (No. 61832001, 6197200),  Beijing Academy of Artificial Intelligence (BAAI), PKU-Baidu Fund 2019BD006, and PKU-Tencent Joint Research Lab. Bin Cui is the corresponding author.
\end{acks}

\bibliographystyle{ACM-Reference-Format}
\bibliography{sample-bibliography} 

\end{document}